\newtheorem{theorem}{Theorem}[section]
\newtheorem{lemma}[theorem]{Lemma}
\newtheorem{definition}[theorem]{Definition}
\newtheorem{remark}[theorem]{Remark}
\newtheorem{example}[theorem]{Example}
\numberwithin{equation}{section}
\title{Perfect State Transfer in Weighted Cubelike Graphs}
\author{Jaideep Mulherkar \\ jaideep\_mulherkar@daiict.ac.in\and Rishikant Rajdeepak \\201521006@daiict.ac.in \and V. Sunita \\ v\_suni@daiict.ac.in}
\begin{document}
\maketitle
\begin{abstract}
    A continuous-time quantum random walk describes the motion of a quantum mechanical particle on an underlying graph. The graph itself is associated with a Hilbert space of dimension equal to the number of vertices. The dynamics of the walk is governed by the unitary operator $\mathcal{U}(t) = e^{iAt}$, where $A$ is the adjacency matrix of the graph. An important notion in the quantum random walk is the transfer of a quantum state from one vertex to another. If the fidelity of the transfer is unity, we call it a perfect state transfer. Many graph families have been shown to admit PST or periodicity, including cubelike graphs. These graphs are unweighted. In this paper, we generalize the PST or periodicity of cubelike graphs to that of weighted cubelike graphs. We characterize the weights for which they admit PST or show periodicity, both at time $t=\frac{\pi}{2}$.
\end{abstract}

\tableofcontents

\section{Introduction}
Graphs are among the most powerful mathematical tools used to model many practical problems in basic sciences, computer sciences, social sciences, network sciences etc. One of the most efficient method to study problems in graph theory is by using linear algebra and group theory. A graph can be represented by matrices, such as adjacency matrix and Laplacian matrix, whose spectra produce several properties of the graph. These algebraic methods answer questions related to (a) graph isomorphism, (b) network flow problems, (c) graph coloring, (d) transitivity (d) regularity, (e) tree number, and so on~\cite{biggs1974,godsilbook}. One of the recent applications of graph spectra is related to quantum physics, where a graph is used to model a network of quantum particles and the evolution of these particles are described by the adjacency matrix of the graph. This phenomenon is known as continuous-time quantum random walk. 

A quantum random walk is a generalization of a classical random walk. It is a very important tool for designing quantum algorithms because it is universal for quantum computation~\cite{feynman1986,childs2009,childs2013}. The quantum walk is of two types: discrete and continuous. In discrete case, a graph is associated with a Hilbert space of dimension $N\times \Delta$, where $N$ is the number of vertices and $\Delta$ is the maximum degree of the graph. The discrete-time quantum walk is described by an unitary operator that repeatedly acts on the associated Hilbert space. In continuous case, the graph is associated with the Hilbert space of dimension $N$ and the evolution of the system is described by $e^{\iota tA}$, where $A$ is the adjacency matrix of the graph and $t$ is real time. Some of the works done on quantum walks can be found in~\cite{aharonov1993,farhi1998,ambainis2001,aharonov2001,moore2002,childs2002,shenvi2003,childs2003,kempe2003,julia2003}.

An important feature of a quantum walk is the transfer of quantum state from one vertex to another with high fidelity. It has been found that many graph families, mostly in continuous case, allow transmission of quantum states with fidelity equal to unity, i.e., the transfer is perfect. For details, see~\cite{bose2003,christandl2005,bernasconi2008,cheung2011,godsil2011,godsil2012,godsil2012(2),stefanak2016,cao2019}. Among these graphs, cubelike graphs are most famous one whose properties have been characterized for determining the existence and finding the pair of vertices admitting perfect state transfer in constant time. It is to be noted that all cubelike graphs do not allow perfect state transfer. But, all cubelike graphs are periodic with period dividing $\frac{\pi}{2}$. In fact, all integral graphs are periodic and cubelike graphs are integral graphs.

In this report we make the following contributions:
\begin{enumerate}
\item We present a characterization of graphs whose adjacency matrices have same eigenvectors. As a special case we characterize cubelike graphs. Eigenvectors for cubelike graphs have been derived using representation theory of finite groups~\cite{babai1979,benjamin2012}. We present an alternate proof using only basic linear algebra. Our motivation was to study some properties which are shared by graphs having same eigenvectors. For example, all cubelike graphs of dimension $n$, $n\in\mathbb{Z}^+$, have same eigenvectors and they have common properties such as; they are all regular, have Hamiltonian path, are vertex and edge transitive, and so on. 
\item We use the above characterization to study perfect state transfer on graphs having eigenvectors same as unweighted cubelike graphs. We show that weighted cubelike graphs with integer weights, and having eigenvectors same as unweighted cubelike graphs, admit perfect state transfer in time $\frac{\pi}{2}$, or they are periodic with period $\frac{\pi}{2}$. 
\end{enumerate}

\section{Preliminaries}
A simple graph $\Gamma$ is represented by a pair $(V,E)$, where $V$ is a non-empty set whose elements are called vertices or nodes, and $E$ is a collection of unordered pairs of distinct vertices. There are several variations of graphs such as multigraph, pseudograph, directed graph, and so on. We focus on undirected weighted graph which is described by the triplet $(V,E,f)$, where the pair $(V,E)$ represents a graph and the function $f:E\rightarrow\mathbb{R}$ assigns weight to each edge. If the range of the function $f$ is $\{0,1\}$, then the graph is a simple graph. 

A graph $\Gamma=(V,E,f)$ on $n$ vertices is described by its adjacency matrix $A$. Suppose vertices in $\Gamma$ are represented by numbers $\{1,\dots, n\}$, then $(i,j)$-entry of $A$ is the weight of the edge $(i,j)$, i.e., $A_{i,j}=f((i,j))$. If $\Gamma$ is simple then
\[
A_{i,j} = \begin{cases} 1,&\mbox{if }(i,j)\in E \\ 0, & \mbox{if }(i,j)\notin E \end{cases}.
\]
The adjacency matrix $A$ is real-symmetric matrix which can be decomposed into its eigenvectors with real eigenvalues. We now define cubelike graphs, see Fig.~\ref{fig:cubelike}

\begin{figure}[t]
\tiny
\centering
    \begin{subfigure}[t]{.45\textwidth}
    \begin{tikzpicture}[scale=.7]
        \begin{scope}[every node/.style = {draw, circle, inner sep=1pt}]
        \node (0) at (-1,1) [label=left:$000$]{$1$};
        \node (1) at (1,1) [label=right:$001$]{$2$};
        \node (3) at (1,-1) [label=right:$011$]{$4$};
        \node (2) at (-1,-1) [label=left:$010$]{$3$};
        \node (4) at (-3,3) [label=left:$100$]{$5$};
        \node (5) at (3,3) [label=right:$101$]{$6$};
        \node (7) at (3,-3) [label=right:$111$]{$8$};
        \node (6) at (-3,-3) [label=left:$110$]{$7$};
    \end{scope}
        \draw (0) edge  (1); 
        \draw (0) edge  (2); 
        \draw (0) edge  (4); 
        \draw (1) edge  (3); 
        \draw (1) edge  (5); 
        \draw (2) edge  (3);
        \draw (2) edge  (6); 
        \draw (3) edge  (7); 
        \draw (4) edge  (5); 
        \draw (4) edge  (6); 
        \draw (5) edge  (7); 
        \draw (6) edge  (7); 
    \end{tikzpicture}
    \caption{\label{fig:Q3} $\mathcal{Q}_3$}
    \end{subfigure}
    \begin{subfigure}[t]{.45\textwidth}
    \begin{tikzpicture}[scale=.7]
        \begin{scope}[every node/.style = {draw, circle, inner sep=1pt}]
        \node (0) at (-1,1) [label=above:]{$1$};
        \node (1) at (1,1) [label=above:]{$2$};
        \node (3) at (1,-1) [label=below:]{$4$};
        \node (2) at (-1,-1) [label=below:]{$3$};
        \node (4) at (-3,3) [label=above:]{$5$};
        \node (5) at (3,3) [label=above:]{$6$};
        \node (7) at (3,-3) [label=below:]{$8$};
        \node (6) at (-3,-3) [label=below:]{$7$};
    \end{scope}
        \draw (0) edge (1); 
        \draw (0) edge (2); 
        \draw (0) edge  (4);
        \draw (0) edge[bend left=20] (3);
        \draw (0) edge[bend right=20] (7);
        \draw (1) edge  (3); 
        \draw (1) edge (5);
        \draw (1) edge[bend right=20]  (2);
        \draw (1) edge [bend left=20]  (6);
        \draw (2) edge  (3);
        \draw (2) edge  (6);
        \draw (2) edge[bend right=20] (5);
        \draw (3) edge[bend left=20] (4);
        \draw (3) edge   (7); 
        \draw (4) edge   (5);
        \draw (4) edge  (6);
        \draw (4) edge[bend right=45] (7);
        \draw (5) edge  (7); 
        \draw (5) edge[ bend left=45]  (6);
        \draw (6) edge  (7); 
    \end{tikzpicture}
    \caption{\label{fig:AQ3} $\mathcal{AQ}_3$}
    \end{subfigure}
    \caption{Pictorial representation of (\subref{fig:Q3}) a Hypercube of dimension $3$ $\mathcal{Q}_3=Cay(\mathbb{Z}_2^3,\{001,010,100\})$ and (\subref{fig:AQ3}) an Augmented cube of dimension $3$ $\mathcal{AQ}_3=Cay(\mathbb{Z}_2^3,\{001,010,011,100,111\})$.}
    \label{fig:cubelike}
\end{figure}
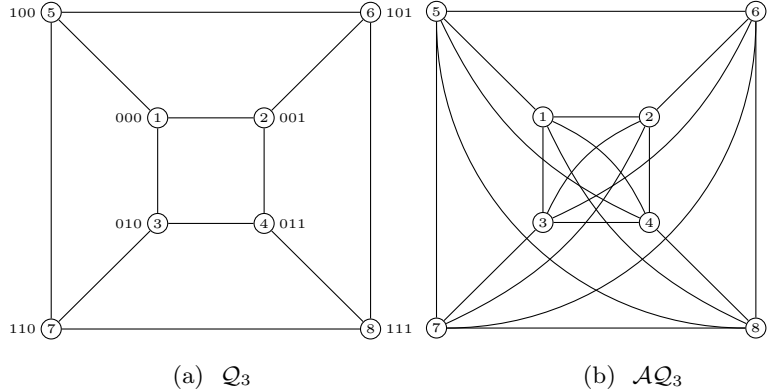

\begin{definition}[Cubelike Graphs]
A Cayley graph is a graph defined over a pair $(G,\Omega)$, where $G$ is a finite group and $\Omega$ is a generating set of $G$, with the properties; (1) $\Omega$ does not contain the identity element, and (2) $\Omega$ is closed under the group inverse, i.e., $x^{-1}\in\Omega$ for all $x\in \Omega$. The Cayley graph, denoted by $Cay(G,\Omega)$, is a graph whose vertices are identified with the elements of $G$ and the edge set is $\{(x,y):xy^{-1}\in\Omega\}$. Clearly, $Cay(G,\Omega)$ is a regular graph on $N=\mid G\mid$ vertices with regularity $\Delta=\mid\Omega\mid$. The Cayley graph $Cay(\mathbb{Z}_2^n,\Omega)$ defined over the Boolean group $\mathbb{Z}_2^n$ is called cubelike graph of dimension $n$.
\end{definition}

\subsection{Spectral decomposition}
An $n\times n$ normal matrix $N\in\mathbb{C}^n\times \mathbb{C}^n$ is defined by 
\[
NN^\dagger = N^\dagger N,
\]
where $N^\dagger$ is the adjoint (complex-conjugate transpose) of $N$. If $NN^\dagger=I$, then $N$ is called unitary matrix. Suppose $N=N^\dagger$ then $N$ is Hermitian (also known as self-adjoint). If entries in $N$ are real and $N$ is Hermitian then $N$ is a real-symmetric matrix, i.e., $N=N^T$. The spectral theory diagonalizes a normal matrix as stated below.
\begin{theorem}[Spectral theory]\label{thm:spectral}\cite{hoffman2004}
Let $N$ be an $n\times n$ normal matrix. Then, the following statements hold true.
\begin{enumerate}
    \item There is an $n\times n$ unitary matrix $P$ such that $D=P^{-1}NP$ is diagonal. 
    \item Suppose $\lambda_1,\dots,\lambda_m$ are distinct eigenvalues of $N$, and $E_j$ is the orthogonal projection on the eigenspace associated with $\lambda_j$, then the spectral decomposition (also called spectral resolution) of $N$ is given by
    \begin{equation}
        N = \lambda_1E_1+\cdots +\lambda_mE_m.
    \end{equation}
\end{enumerate}
\end{theorem}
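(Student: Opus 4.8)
The plan is to prove part~(1) via Schur triangularization followed by the observation that an upper-triangular normal matrix must already be diagonal, and then to obtain part~(2) by grouping the resulting orthonormal eigenbasis according to the distinct eigenvalues.

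For part~(1), I would first establish Schur's lemma: every $n\times n$ complex matrix $N$ is unitarily similar to an upper-triangular matrix. This goes by induction on $n$. Since $\mathbb{C}$ is algebraically closed, $N$ has an eigenvalue $\lambda_1$ with a unit eigenvector $v_1$; extend $v_1$ to an orthonormal basis and let $Q_1$ be the unitary with $v_1$ as its first column, so that $Q_1^\dagger N Q_1$ has first column $\lambda_1 e_1$ and its lower-right $(n-1)\times(n-1)$ block can be triangularized by the inductive hypothesis. Assembling these unitaries yields $T=Q^\dagger N Q$ upper triangular with $Q$ unitary. Next I would invoke normality: because $Q$ is unitary, $TT^\dagger=Q^\dagger NN^\dagger Q=Q^\dagger N^\dagger N Q=T^\dagger T$, so $T$ is normal. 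Comparing the $(1,1)$ entries of $TT^\dagger$ and $T^\dagger T$ gives $\sum_{k}|T_{1k}|^2=|T_{11}|^2$, forcing $T_{1k}=0$ for $k>1$; iterating this comparison down the diagonal, each time using that the earlier rows have already been shown to be zero off the diagonal, shows every superdiagonal entry vanishes. Hence $D:=T$ is diagonal and $P:=Q$ is the desired unitary with $D=P^{-1}NP$.

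For part~(2), write $D=\operatorname{diag}(\mu_1,\dots,\mu_n)$ with each $\mu_k\in\{\lambda_1,\dots,\lambda_m\}$, and let $p_1,\dots,p_n$ be the columns of $P$, an orthonormal basis with $Np_k=\mu_k p_k$. For each distinct eigenvalue $\lambda_j$ define $E_j=\sum_{k:\,\mu_k=\lambda_j}p_k p_k^\dagger$. Using $p_k^\dagger p_\ell=\delta_{k\ell}$, a direct check shows each $E_j$ is Hermitian and idempotent, hence an orthogonal projection, that $E_jE_{j'}=0$ for $j\neq j'$, and that $\sum_{j=1}^m E_j=PP^\dagger=I$. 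Expanding an arbitrary vector in the basis $\{p_k\}$ shows that $Nv=\lambda_j v$ forces the coefficients of every $p_k$ with $\mu_k\neq\lambda_j$ to vanish, so the range of $E_j$ is exactly the $\lambda_j$-eigenspace. Finally $N=PDP^\dagger=\sum_{k=1}^n\mu_k p_k p_k^\dagger=\sum_{j=1}^m\lambda_j E_j$, which is the claimed spectral resolution.

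The main obstacle is the step inside part~(1) showing that an upper-triangular normal matrix is diagonal: the entrywise comparison of $TT^\dagger$ and $T^\dagger T$ has to be organized as a clean induction on the row index, carrying the hypothesis that all strictly-upper entries of the earlier rows have already been shown to be zero. Everything else—Schur's lemma and the bookkeeping for the projections $E_j$—is routine linear algebra. Since this theorem is classical and already cited as~\cite{hoffman2004}, I would present only this outline rather than grinding through the full computation.
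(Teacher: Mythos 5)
Your outline is correct: Schur triangularization, the entrywise $TT^\dagger$ versus $T^\dagger T$ comparison forcing an upper-triangular normal matrix to be diagonal, and the grouping of the orthonormal eigenbasis into the projections $E_j$ together give a complete and standard proof of both parts. Note that the paper itself offers no proof of this statement --- it is quoted as a classical result from the cited linear algebra text --- so there is no in-paper argument to compare against; your proposal matches the textbook route, and presenting it as an outline (with the row-by-row induction in the diagonalization step spelled out if you ever need the full details) is entirely adequate.
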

The set $S=\{\lambda_1,\dots,\lambda_m\}$ of eigenvalues is called the spectrum of $N$. In Theorem~\ref{thm:spectral}, the $j$-th column of $P$, denoted by $P_{*j}$, is an eigenvector of $N$ with eigenvalue $d_j=D_{j,j}$, i.e., $NP_{*j}=d_jP_{*j}$. The spectral theory for unitary, Hermitian and real-symmetric matrices can be formulated as;
\begin{theorem}\cite{hoffman2004}
If $N$ is a normal matrix with unitary matrix $P$ satisfying $D=P^{-1}DP$, such that $D$ is diagonal, then,
\begin{enumerate}
    \item $N$ is self-adjoint iff its eigenvalues are real, 
    \item $N$ is unitary iff its eigenvalues are of absolute value $1$.
    \item $P$ is orthogonal and $D$ is real iff $N$ is real-symmetric matrix.
\end{enumerate}
\end{theorem}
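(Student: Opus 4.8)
The plan is to exploit the fact that, once the diagonalization is fixed (I read the hypothesis $D=P^{-1}DP$ as an evident typo for $D=P^{-1}NP$, so that $N=PDP^{-1}=PDP^{\dagger}$ since $P$ is unitary), every structural property of $N$ phrased in terms of the operations $\dagger$, matrix multiplication, and $I$ translates into the corresponding property of the diagonal matrix $D$, which can be checked entrywise. Concretely, the map $\Phi:X\mapsto P^{\dagger}XP$ is an automorphism of the $*$-algebra $M_n(\mathbb{C})$: it is linear, multiplicative, fixes $I$, and satisfies $\Phi(X^{\dagger})=\Phi(X)^{\dagger}$ because $P$ is unitary. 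Since $\Phi(N)=D$, any matrix identity built from $N$, $N^{\dagger}$, sums, products and $I$ holds for $N$ if and only if the analogous identity holds for $D$. Note also that $D^{\dagger}$ is the diagonal matrix with entries $\overline{d_j}$, where $d_1,\dots,d_n$ are the diagonal entries of $D$, i.e. the eigenvalues of $N$.

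For part 1: $N=N^{\dagger}$ iff $D=D^{\dagger}$ iff $d_j=\overline{d_j}$ for every $j$, i.e. iff every eigenvalue is real. For part 2: $NN^{\dagger}=I$ iff $DD^{\dagger}=I$; since $DD^{\dagger}$ is the diagonal matrix with entries $|d_j|^2$, this holds iff $|d_j|=1$ for every $j$. Both directions of both equivalences are immediate from this dictionary, and no genuine obstacle arises here.

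For part 3 the easy direction is again purely algebraic: if $P$ is orthogonal (so real with $P^{-1}=P^{T}$) and $D$ is real, then $N=PDP^{T}$ is a product of real matrices hence real, and $N^{T}=(PDP^{T})^{T}=PD^{T}P^{T}=PDP^{T}=N$ because $D$ is diagonal, so $N$ is real-symmetric. The converse is the only part requiring more than symbol-pushing, and I expect it to be the crux: one must show that when $N$ is real-symmetric one can \emph{choose} a diagonalizing $P$ that is orthogonal with $D$ real. First, a real-symmetric matrix is in particular Hermitian, so by part 1 its eigenvalues are real and any diagonalizing $D$ is real. Second, for each eigenvalue $\lambda_j$ the eigenspace $\ker(N-\lambda_j I)$ is the kernel of a real matrix, hence spanned by real vectors; applying Gram--Schmidt within each such eigenspace (an operation that stays inside $\mathbb{R}^n$) produces a real orthonormal basis of each eigenspace, and distinct eigenspaces of a normal matrix are mutually orthogonal, so assembling these bases as the columns of $P$ yields a real orthogonal matrix with $P^{-1}NP=D$ real diagonal. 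The subtlety to watch is precisely the quantifier on $P$: the statement must be read as asserting the \emph{existence} of such a $P$, not that every unitary diagonalizer is orthogonal.
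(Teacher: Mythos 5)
Your proof is correct. Note that the paper does not actually prove this statement --- it is quoted as a standard textbook result from Hoffman--Kunze --- so there is no in-paper argument to compare against; your write-up is the standard one. You rightly repair the typo (the hypothesis should read $D=P^{-1}NP$), the conjugation-is-a-$*$-isomorphism dictionary settles parts 1 and 2 immediately, and for part 3 you correctly isolate the only nontrivial point: the claim must be read existentially, since a real-symmetric $N$ certainly admits non-orthogonal unitary diagonalizers (e.g.\ $N=I$). Your construction of a real orthogonal $P$ --- real eigenvalues by part 1, real bases of the eigenspaces $\ker(N-\lambda_j I)$ because the rank of a real matrix is the same over $\mathbb{R}$ and $\mathbb{C}$, Gram--Schmidt within each eigenspace, and orthogonality of distinct eigenspaces --- is exactly the textbook route and is complete.
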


Another useful result from linear algebra associates spectra of normal matrix $N$ with its matrix exponential $e^{N}$.
\begin{theorem}\cite{hoffman2004}
Let $N$ be a normal matrix with unitary matrix $P$ satisfying $D=P^{-1}NP$, such that $D$ is diagonal. Let $N=\sum_{j=1}^m\lambda_jE_j$ be the spectral decomposition of $N$. Suppose $f$ is a complex-valued function defined over $S$, then the following statements hold true.
\begin{enumerate}
    \item The linear operator $f(N)$ defined by
    \[
    f(N) = \sum_{j=1}^mf(\lambda_j)E_j
    \]
is a diagonalizable normal operator with spectrum $f(S)$. In other words,
\[f(D)=P^{-1}f(N)P\]
is diagonal with the same unitary matrix $P$, i.e., the $j$-th column $P_{*j}$ of $P$ is eigenvector of $f(N)$ with eigenvalue $f(d_j)$, $1\leq j\leq n$. 
\end{enumerate}
\end{theorem}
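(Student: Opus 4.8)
The plan is to reduce everything to the algebra of the spectral projections $E_1,\dots,E_m$, which by Theorem~\ref{thm:spectral} are Hermitian, idempotent, mutually orthogonal, and complete. First I would record these identities explicitly: $E_j^\dagger = E_j$, $E_j E_k = \delta_{jk} E_j$, and $\sum_{j=1}^m E_j = I$. From the definition $f(N) = \sum_{j=1}^m f(\lambda_j) E_j$ together with $E_j^\dagger = E_j$ one gets $f(N)^\dagger = \sum_{j=1}^m \overline{f(\lambda_j)} E_j$; multiplying the two sums in either order and collapsing via $E_j E_k = \delta_{jk} E_j$ shows that both $f(N) f(N)^\dagger$ and $f(N)^\dagger f(N)$ equal $\sum_{j=1}^m |f(\lambda_j)|^2 E_j$, so $f(N)$ is normal.

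Next I would pin down the eigenvectors. Since $P$ is unitary with $D = P^{-1} N P$ diagonal, the $j$-th column satisfies $N P_{*j} = d_j P_{*j}$ with $d_j = D_{jj}$, and $P_{*j}\neq 0$, so $d_j$ equals exactly one of the distinct eigenvalues, say $d_j = \lambda_k$, and $P_{*j}$ lies in the corresponding eigenspace. Because eigenspaces of a normal operator for distinct eigenvalues are orthogonal, $E_l P_{*j} = P_{*j}$ when $\lambda_l = d_j$ and $E_l P_{*j} = 0$ otherwise; hence $f(N) P_{*j} = \sum_{l} f(\lambda_l) E_l P_{*j} = f(d_j) P_{*j}$. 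Reading this column by column gives $P^{-1} f(N) P = f(D)$, the diagonal matrix with entries $f(d_1),\dots,f(d_n)$, so $f(N)$ is diagonalized by the same unitary $P$ and is in particular diagonalizable.

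Finally, for the spectrum: the eigenvalue list of $f(N)$ is $\{f(d_1),\dots,f(d_n)\}$, and since $\{d_1,\dots,d_n\}$ as a set equals $\{\lambda_1,\dots,\lambda_m\} = S$, this set is exactly $f(S)$. If one additionally wants the genuine spectral resolution of $f(N)$ when $f$ is not injective on $S$, one observes that for a value $\mu \in f(S)$ the relevant orthogonal projection is $\sum_{k:\,f(\lambda_k)=\mu} E_k$, which is again Hermitian and idempotent by orthogonality of the $E_k$, so $f(N) = \sum_{\mu \in f(S)} \mu \big(\sum_{k:\,f(\lambda_k)=\mu} E_k\big)$ is the spectral decomposition of $f(N)$.

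I expect no serious obstacle here: the only thing needing care is the bookkeeping between the two indexings — the $n$ columns of $P$ versus the $m$ distinct eigenvalues — and invoking the projection identities exactly at the points where indices collapse; everything else is a direct computation with $E_j E_k = \delta_{jk} E_j$ and completeness.
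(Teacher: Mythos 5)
Your proposal is correct: the paper does not prove this statement at all (it is quoted as a known result from \cite{hoffman2004}), and your argument is exactly the standard functional-calculus proof of it — normality of $f(N)$ from the identities $E_j^\dagger=E_j$, $E_jE_k=\delta_{jk}E_j$, $\sum_j E_j=I$, simultaneous diagonalization by the same $P$ via $E_l P_{*j}=\delta$-type action on the columns, and the regrouped projections $\sum_{f(\lambda_k)=\mu}E_k$ when $f$ is not injective. Your care with the two indexings ($n$ columns of $P$ versus $m$ distinct eigenvalues) is precisely the only delicate point, and you handle it correctly, so nothing is missing.
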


\subsection{Perfect state transfer}
Let $\Gamma$ be an undirected and weighted graph with loops and $A$ be the adjacency matrix. A quantum walk on $\Gamma$ is described by an evolution of the quantum system associated with the graph. Suppose the graph has $N$ vertices, then it is associated with a Hilbert space $\mathcal{H}_P\cong \mathbb{C}^N$, called the position space, and the computational basis for $\mathcal{H}_P$ is represented by;
\[
\{\ket{v}:v\mbox{ is a vertex in }\Gamma\}.
\]
The continuous-time quantum walk on $\Gamma$ is described by the transition matrix $\mathcal{U}(t)=e^{\iota t A}$, where $\iota=\sqrt{-1}$, i.e., if $\ket{\psi(0)}$ is the initial state of the system then, at time $t$, the state of the system is given by 
\[
\ket{\psi(t)}=e^{\iota t A}\ket{\psi(0)}.
\]
\begin{definition}
A graph is said to admit perfect state transfer, if the quantum walker beginning at some vertex $u$ reaches a distinct vertex $v$ with probability $1$, i.e., $\mathcal{U}(\tau)\ket{u}=\lambda\ket{v}$, for some $\lambda\in\mathbb{C}$, satisfies
\[
\braket{v|e^{\iota \tau A}|u}=|\lambda|^2=1,\qquad \mbox{ for some real time }\tau.
\]
Alternatively, we say perfect state transfer occurs from the vertex $u$ to the vertex $v$.
\end{definition}
\begin{remark}
The matrix exponential of $\iota t A$ can be seen as
\[
\mathcal{U}(t)=\sum_{k=0}^{\infty}\iota^k\frac{t^kA^k}{k!}.
\]
Since $A$ is symmetric $\mathcal{U}(t)$ is symmetric, and $\overline{e^{\iota t A}}=e^{-\iota t A}$ implies $\mathcal{U}(t)$ is unitary. Moreover,
\[
\mathcal{U}(t_1+t_2)=e^{\iota(t_1+t_2)A}=e^{\iota t_1A}e^{\iota t_2A}=\mathcal{U}(t_1)\mathcal{U}(t_2).
\]
\end{remark}

As a special case of the spectral theory (Theorem~\ref{thm:spectral}), we express the transition matrix $\mathcal{U}(t)$ in more useful way. 
\begin{lemma}[Spectral theory]
Let $\Gamma$ be a graph on $n$ vertices with the adjacency matrix $A$. Suppose $P$ is an orthogonal matrix such that $D=P^TAP$ is real diagonal, then the transition matrix $\mathcal{U}(t)=e^{\iota t A}$ is expressed as
\begin{equation}
    \begin{split}
        \mathcal{U}(t) &= Pe^{\iota t D}P^T \\
        &= \sum_{k=1}^ne^{\iota t D_{k,k}} \ket{P_{*k}}\bra{P_{*k}} \\
        \implies \mathcal{U}(t)_{u,v} &= \sum_{k=1}^n e^{\iota t D_{k,k}}P_{u,k}P_{v,k}.
    \end{split}
\end{equation}
\end{lemma}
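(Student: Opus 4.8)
The plan is to obtain the lemma as the specialization of the functional-calculus theorem of the previous subsection to the analytic function $f(\lambda)=e^{\iota t\lambda}$, or, equivalently and self-containedly, to compute directly from the power series $\mathcal{U}(t)=\sum_{k\ge 0}\frac{(\iota t)^k}{k!}A^k$; in either form the argument is essentially bookkeeping built on the spectral theorem.

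First I would record that, since $A$ is the real-symmetric adjacency matrix of an undirected weighted graph, the spectral theorem furnishes (and the statement of the lemma already fixes) an orthogonal $P$ with $D=P^TAP$ real and diagonal; equivalently $A=PDP^T$ with $P^TP=PP^T=I$. Then, for every $k\ge 0$, the interior factors collapse via $P^TP=I$, so $A^k=(PDP^T)^k=PD^kP^T$. Substituting into the exponential series and pulling the fixed matrices $P$ and $P^T$ outside the summation gives $\mathcal{U}(t)=P\!\left(\sum_{k\ge 0}\frac{(\iota t)^k}{k!}D^k\right)\!P^T=Pe^{\iota tD}P^T$. Since $D=\mathrm{diag}(D_{1,1},\dots,D_{n,n})$, the matrix $e^{\iota tD}$ is diagonal with diagonal entries $e^{\iota tD_{k,k}}$, which establishes the first displayed identity.

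Next I would expand the triple product columnwise: writing $P=\sum_{k=1}^n\ket{P_{*k}}\bra{e_k}$ in terms of the standard basis $\{e_k\}$, the diagonal factor $e^{\iota tD}$ simply rescales the $k$-th outer product by the scalar $e^{\iota tD_{k,k}}$, so $\mathcal{U}(t)=Pe^{\iota tD}P^T=\sum_{k=1}^n e^{\iota tD_{k,k}}\ket{P_{*k}}\bra{P_{*k}}$, which is the second identity. Reading off the $(u,v)$-entry of each rank-one term $\ket{P_{*k}}\bra{P_{*k}}$, namely $P_{u,k}P_{v,k}$ (real, because $P$ is real-orthogonal), then yields $\mathcal{U}(t)_{u,v}=\sum_{k=1}^n e^{\iota tD_{k,k}}P_{u,k}P_{v,k}$.

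The only step deserving a word of care is the interchange of the constant matrices $P,P^T$ with the infinite sum, which is legitimate because the exponential series converges absolutely in any submultiplicative matrix norm and matrix multiplication is continuous; alternatively this can be bypassed entirely by appealing to the functional-calculus theorem quoted above with $f(\lambda)=e^{\iota t\lambda}$, for which $f(A)=\mathcal{U}(t)$ and $f(D)=e^{\iota tD}$. I do not expect a genuine obstacle here — the mathematical content of the lemma is carried entirely by the spectral theorem that precedes it.
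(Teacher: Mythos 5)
Your proof is correct and matches the paper's intent: the paper states this lemma without proof, presenting it as an immediate special case of the preceding spectral/functional-calculus theorems, which is exactly the route you take (with the power-series computation $A^k=PD^kP^T$ spelled out as a self-contained alternative). Nothing further is needed.
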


\begin{example}
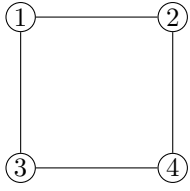
\begin{figure}[t]
    \centering
    \begin{tikzpicture}
        \begin{scope}[every node/.style={draw,circle,inner sep=1pt}]
            \node (0) at (-1,1) {1};
            \node (1) at (1,1) {2};
            \node (3) at (1,-1) {4};
            \node (2) at (-1,-1) {3};
        \end{scope} 
        \draw (0) -- (1); \draw (0) -- (2); \draw (1) -- (3); \draw (2) -- (3);
    \end{tikzpicture}
    \caption{PST occurs between the pairs \{1,4\} and \{2,3\} with time $\frac{\pi}{2}$, and the graph is periodic with period $\pi$.}\label{fig:cycle}
    \label{fig:my_label}
\end{figure}
Consider the graph on cycle of size 4 (Figure~\ref{fig:cycle}). Then, the adjacency matrix $A$ is given by
\[
A=\begin{bmatrix} 0 & 1 & 1 & 0 \\ 1 & 0 & 0 & 1 \\ 1&0&0&1 \\ 0&1&1&0  \end{bmatrix},
\]
with spectral decomposition 
\[
A=PDP^T,\mbox{ where }P=\frac{1}{2}\begin{bmatrix} 1&1&1&1 \\ 1&-1&1&-1 \\ 1&1&-1&-1 \\ 1&-1&-1&1 \end{bmatrix}, \mbox{ and }D=\begin{bmatrix} 2&0&0&0 \\ 0&0&0&0 \\ 0&0&0&0 \\ 0&0&0&-2 \end{bmatrix}.
\]
Therefore, the spectral decomposition for the transition matrix with time $t=\pi/2$ is
\[
\mathcal{U}(t=\pi/2) = \sum_{k=1}^4e^{\iota \frac{\pi}{2} D_{k,k}} \ket{P_{*k}}\bra{P_{*k}} = \begin{bmatrix} 0&0&0&-1 \\ 0&0&-1&0 \\0&-1&0&0 \\ -1&0&0&0 \end{bmatrix}.
\]
Thus, perfect state transfer occur between the pairs $(1,4)$ and $(2,3)$, both in time $\frac{\pi}{2}$.
\end{example}

\begin{lemma}
If perfect state transfer occurs from a vertex $u$ to a vertex $v$ in a graph, then perfect state transfer occurs from $v$ to $u$.
\end{lemma}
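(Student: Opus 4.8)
The plan is to use the two structural properties of the transition matrix recorded in the Remark above: for every real $t$, the matrix $\mathcal{U}(t)=e^{\iota t A}$ is \emph{symmetric} (because $A$ is real symmetric) and \emph{unitary}. Assume perfect state transfer occurs from $u$ to $v$ at time $\tau$, i.e.\ $\mathcal{U}(\tau)\ket{u}=\lambda\ket{v}$ with $|\lambda|^2=1$; equivalently $\braket{v|\mathcal{U}(\tau)|u}=\lambda$. The goal is to deduce that $\mathcal{U}(\tau)\ket{v}=\lambda\ket{u}$, which is exactly perfect state transfer from $v$ to $u$, and at the \emph{same} time $\tau$.

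First I would transpose: since $\mathcal{U}(\tau)^T=\mathcal{U}(\tau)$, the $(u,v)$ and $(v,u)$ entries agree, so
\[
\braket{u|\mathcal{U}(\tau)|v}=\braket{v|\mathcal{U}(\tau)|u}=\lambda .
\]
Then I would invoke unitarity: $\mathcal{U}(\tau)$ preserves norms, so the column vector $\mathcal{U}(\tau)\ket{v}$ has unit length; but its $u$-coordinate already equals $\lambda$, of modulus $1$, which forces every other coordinate to vanish. Hence $\mathcal{U}(\tau)\ket{v}=\lambda\ket{u}$ and $|\braket{u|\mathcal{U}(\tau)|v}|^2=|\lambda|^2=1$, completing the argument. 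Equivalently one can phrase the last step via the Cauchy--Schwarz inequality, $|\braket{u|\mathcal{U}(\tau)|v}|\le\|\ket{u}\|\,\|\mathcal{U}(\tau)\ket{v}\|=1$, where equality is attained and therefore $\mathcal{U}(\tau)\ket{v}$ must be a scalar multiple of $\ket{u}$.

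I do not anticipate a genuine obstacle here: the only point requiring care is to keep the transpose and the adjoint distinct — $\mathcal{U}(\tau)^T=\mathcal{U}(\tau)$ while $\mathcal{U}(\tau)^\dagger=\mathcal{U}(-\tau)=\mathcal{U}(\tau)^{-1}$ — and both of these facts follow immediately from expanding $e^{\iota\tau A}$ as a power series with $A$ real and symmetric. An alternative, purely algebraic route is to apply $\mathcal{U}(-\tau)=\mathcal{U}(\tau)^{-1}$ to $\mathcal{U}(\tau)\ket{u}=\lambda\ket{v}$ to get $\mathcal{U}(-\tau)\ket{v}=\bar\lambda\ket{u}$, and then take complex conjugates using $\overline{\mathcal{U}(-\tau)}=\mathcal{U}(\tau)$; this recovers the same conclusion, but the symmetry argument delivers the matching transfer time $\tau$ more transparently.
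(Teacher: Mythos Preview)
Your proof is correct and follows the same route as the paper: both invoke the symmetry of $\mathcal{U}(\tau)$ established in the preceding Remark to pass from $\mathcal{U}(\tau)\ket{u}=\lambda\ket{v}$ to $\mathcal{U}(\tau)\ket{v}=\lambda\ket{u}$. The paper's one-line proof leaves the unitarity step implicit, whereas you spell out explicitly why symmetry of the $(u,v)$ entry together with norm preservation forces the remaining coordinates of $\mathcal{U}(\tau)\ket{v}$ to vanish --- a welcome clarification, but not a different argument.
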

\begin{proof}
Suppose for some real $\tau$, $\mathcal{U}(\tau)\ket{u}=\lambda \ket{v}$, for some scalar $\lambda$, with $|\lambda|=1$. Since $\mathcal{U}(\tau)$ is symmetric, 
\[
\mathcal{U}(\tau)\ket{v}=\lambda\ket{u}.
\]
\end{proof}

Notice that $\mathcal{U}(\tau)^2\ket{u}=\lambda^2\ket{u}$. But, $\mathcal{U}(\tau)^2=\mathcal{U}(2\tau)$. Therefore, the walker returns back to the starting vertex $u$ after time $2\tau$. Such property is termed as periodicity.
\begin{definition}
If a walker begins at a vertex $u$ in a graph and returns back after time $\tau$ with one probability, i.e.,
\[
\mathcal{U}(\tau)\ket{u}=\lambda\ket{u},\qquad |\lambda|=1,
\]
we say the graph is periodic at $u$ or it is periodic relative to $u$, with period $\tau$. If the graph is periodic relative to every vertex with the same period $\tau$, we say the graph is periodic.
\end{definition}
\begin{lemma}
If a graph admits perfect state transfer from vertex $u$ to $v$, then the graph is periodic at $u$.
\end{lemma}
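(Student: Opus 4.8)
The plan is to leverage the semigroup property $\mathcal{U}(t_1+t_2)=\mathcal{U}(t_1)\mathcal{U}(t_2)$ recorded in the Remark, together with the preceding lemma on the reversibility of perfect state transfer. Concretely, suppose perfect state transfer occurs from $u$ to $v$ at time $\tau$, so that $\mathcal{U}(\tau)\ket{u}=\lambda\ket{v}$ for some scalar $\lambda$ with $|\lambda|=1$. First I would invoke the symmetry of $\mathcal{U}(\tau)$ (or equivalently the previous lemma) to deduce that $\mathcal{U}(\tau)\ket{v}=\lambda\ket{u}$ with the \emph{same} $\lambda$.

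Next I would compose: apply $\mathcal{U}(\tau)$ to both sides of $\mathcal{U}(\tau)\ket{u}=\lambda\ket{v}$ to get
\[
\mathcal{U}(\tau)^2\ket{u}=\lambda\,\mathcal{U}(\tau)\ket{v}=\lambda^2\ket{u}.
\]
Then I would use $\mathcal{U}(\tau)^2=\mathcal{U}(2\tau)$ to rewrite this as $\mathcal{U}(2\tau)\ket{u}=\lambda^2\ket{u}$. Finally, since $|\lambda|=1$ we have $|\lambda^2|=1$, which is exactly the defining condition for the graph to be periodic at $u$ with period $2\tau$.

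There is no real obstacle here — the statement is essentially the content of the Remark immediately preceding it, made into a lemma. The only point requiring a line of care is that the same scalar $\lambda$ appears in both $\mathcal{U}(\tau)\ket{u}=\lambda\ket{v}$ and $\mathcal{U}(\tau)\ket{v}=\lambda\ket{u}$; this follows because $\mathcal{U}(\tau)$ is a symmetric matrix, so $\bra{u}\mathcal{U}(\tau)\ket{v}=\bra{v}\mathcal{U}(\tau)\ket{u}=\lambda$. Everything else is a two-line computation, so I would keep the write-up to just these few steps.
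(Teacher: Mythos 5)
Your proof is correct and follows exactly the paper's own argument: the paper establishes this lemma via the observation preceding the definition of periodicity, namely that symmetry of $\mathcal{U}(\tau)$ gives $\mathcal{U}(\tau)\ket{v}=\lambda\ket{u}$ with the same $\lambda$, so $\mathcal{U}(2\tau)\ket{u}=\mathcal{U}(\tau)^2\ket{u}=\lambda^2\ket{u}$ with $|\lambda^2|=1$. Your attention to the fact that the same scalar $\lambda$ recurs is a nice touch, but otherwise there is nothing to add.
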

\begin{lemma}
If perfect state transfer occurs between $u$ and $v$, and between $u$ and $w$, then $v=w$.
\end{lemma}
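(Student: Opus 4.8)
The plan is to work entirely in the spectral picture and reduce the statement to a claim about sign patterns. Write $A=\sum_j\lambda_jE_j$ with distinct real eigenvalues $\lambda_j$ and real orthogonal projections $E_j$, and suppose PST occurs from $u$ to $v$ at a time $\tau_1$ and from $u$ to $w$ at a time $\tau_2$, so that $\mathcal U(\tau_1)\ket u=\gamma_1\ket v$ and $\mathcal U(\tau_2)\ket u=\gamma_2\ket w$ with $|\gamma_1|=|\gamma_2|=1$, $v\ne u$, $w\ne u$. Applying $E_j$ to $\mathcal U(\tau_1)\ket u=\gamma_1\ket v$ gives $e^{\iota\tau_1\lambda_j}E_j\ket u=\gamma_1E_j\ket v$; since each $E_j$ is real, $E_j\ket u$ and $E_j\ket v$ are real vectors, so for $j$ in the eigenvalue support $\mathcal S=\{j:E_j\ket u\ne 0\}$ the scalar $\sigma_j:=\gamma_1^{-1}e^{\iota\tau_1\lambda_j}$ appearing in $E_j\ket v=\sigma_jE_j\ket u$ is real of modulus $1$, hence $\sigma_j=\pm1$, and $\mathcal S$ is the same for $u$ and $v$; the same applies to $w$. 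Summing over $j\in\mathcal S$ gives $\ket v=\sum_{j\in\mathcal S}\sigma_jE_j\ket u$ and, similarly, $\ket w=\sum_{j\in\mathcal S}\rho_jE_j\ket u$ with signs $\rho_j=\gamma_2^{-1}e^{\iota\tau_2\lambda_j}=\pm1$. So it suffices to show $\sigma_j=\varepsilon\rho_j$ for one fixed sign $\varepsilon$: then $\ket v=\varepsilon\ket w$, and since $\ket v,\ket w$ are $0/1$ vectors of norm $1$ this forces $\varepsilon=1$ and $v=w$.

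To tie the two sign patterns together I would use periodicity. Because PST from $u$ makes the walk periodic at $u$, the set $T_u$ of times $t$ with $\mathcal U(t)\ket u\in\mathbb C\ket u$ is a subgroup of $(\mathbb R,+)$, and it is closed since $t\mapsto\mathcal U(t)\ket u$ is continuous and $\mathbb C\ket u$ is closed; it is proper (as $\tau_1\notin T_u$, because $\ket v$ is not proportional to $\ket u$) and nontrivial (as $2\tau_1\in T_u$), so $T_u=\pi_u\mathbb Z$ for some $\pi_u>0$. From $2\tau_1,2\tau_2\in T_u$ but $\tau_1,\tau_2\notin T_u$ we get $2\tau_1=a\pi_u$, $2\tau_2=b\pi_u$ with $a,b$ odd integers. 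Also $\mathcal U(\pi_u)\ket u=\nu\ket u$ for some unit $\nu$, and comparing spectral components (the vectors $E_j\ket u$, $j\in\mathcal S$, are mutually orthogonal and nonzero) gives $e^{\iota\pi_u\lambda_j}=\nu$ for all $j\in\mathcal S$; fixing a square root $\sqrt\nu$, we get $e^{\iota(\pi_u/2)\lambda_j}=\epsilon_j\sqrt\nu$ with $\epsilon_j=\pm1$. Since $a$ is odd, $\sigma_j=\gamma_1^{-1}e^{\iota(a\pi_u/2)\lambda_j}=\gamma_1^{-1}\epsilon_j(\sqrt\nu)^a$, and similarly $\rho_j=\gamma_2^{-1}\epsilon_j(\sqrt\nu)^b$, so $\sigma_j/\rho_j=(\gamma_2/\gamma_1)(\sqrt\nu)^{a-b}$ is independent of $j$. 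This is exactly $\sigma_j=\varepsilon\rho_j$ (with $\varepsilon\in\{\pm1\}$ because each $\sigma_j,\rho_j$ is), and the reduction above concludes.

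I expect the main obstacle to be exactly this second step, namely recognizing that one \emph{must} invoke the times $\tau_1,\tau_2$ rather than argue only from the relations $E_j\ket v=\pm E_j\ket u$ (``strong cospectrality'' of $u$ with $v$ and with $w$). Strong cospectrality alone does not pin down a unique partner: for $A=\tfrac14H\,\mathrm{diag}(3,1,-1,-3)\,H$ with $H$ the $4\times4$ Hadamard matrix, vertex $1$ is strongly cospectral with both vertex $2$ and vertex $3$, yet PST from $1$ occurs only to $2$. What rescues the lemma is that PST additionally imposes the rigid relation $e^{\iota\tau\lambda_j}=\gamma\sigma_j$ on the eigenvalues in $\mathcal S$, forcing $T_u$ to be genuinely cyclic and $\tau_1,\tau_2$ to be \emph{odd} half-periods; that parity is what makes the two sign vectors agree up to an overall sign. (An equivalent but more computational route avoids $\pi_u$: the relations $\tau_1(\lambda_j-\lambda_k),\tau_2(\lambda_j-\lambda_k)\in\pi\mathbb Z$ place the $\lambda_j$ with $j\in\mathcal S$ on a common rational grid, after which Bézout gives $\tau_1g,\tau_2g\in\pi\mathbb Z$ with both quotients odd.)
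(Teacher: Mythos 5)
Your argument is correct and complete. Note first that the paper states this lemma without any proof, so there is no in-paper argument to compare against; what you have supplied is essentially the standard uniqueness-of-the-PST-partner argument from the literature (Kay, Godsil), and it fits the paper's toolkit (real symmetric $A$, symmetric unitary $\mathcal U(t)$, the preceding periodicity lemma, which is what justifies $2\tau_1,2\tau_2\in T_u$). The two halves of your proof are sound: the spectral step correctly yields $E_j\ket v=\sigma_j E_j\ket u$ and $E_j\ket w=\rho_j E_j\ket u$ with $\sigma_j,\rho_j\in\{\pm1\}$ (strong cospectrality), and the closed-subgroup step correctly identifies $T_u=\pi_u\mathbb Z$ with $2\tau_1=a\pi_u$, $2\tau_2=b\pi_u$ and $a,b$ odd, which forces $\sigma_j/\rho_j$ to be independent of $j$ and hence $\ket v=\ket w$ since both are $0/1$ unit vectors. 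Your diagnosis of where the difficulty lies is also accurate: strong cospectrality alone cannot single out the partner, and the parity of the half-periods is the decisive input. Two small observations: if the lemma is read with a single common time $\tau$ (which may well be the intended reading here, since the paper works throughout at $\tau=\pi/2$), the statement is immediate from $\gamma_1\ket v=\mathcal U(\tau)\ket u=\gamma_2\ket w$, so your proof handles the strictly stronger statement with possibly different times; and the appeal to $2\tau_i\in T_u$ could be made self-contained by noting $e^{\iota\tau_1\lambda_j}=\gamma_1\sigma_j$ gives $\mathcal U(2\tau_1)\ket u=\gamma_1^2\ket u$ directly, rather than routing through the symmetry lemma.
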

There are graphs which are periodic at a vertex, or periodic at each vertex but do not admit perfect state transfer. For example, star $K_{1,n}$ is periodic at one vertex and complete graph $K_n$ is periodic, but none of them has PST pairs. In general, simple graphs do not admit perfect state transfer. However, by assigning weights to edges some of them may show the occurrence of perfect state transfer. In \cite{canul2009}, it is shown that the join of a weighted two-vertex graph with any regular graph has perfect state transfer. For more reference on weighted graphs allowing PST, see \cite{canul2010,ge2011,vinet2020}. 

\section{Graphs having same eigenvectors}
Let $\Gamma$ be an undirected weighted graph wth adjacency matrix $A$. We say an orthogonal matrix $P$ and a column matrix $X$ are eigenvectors and eigenvalues of $\Gamma$, respectively, if columns of $P$ are eigenvectors of $A$ and entries in $X$ are corresponding eigenvalues, i.e., $AP_{*j}=X_{j,1}P_{*j}$, $1\leq j\leq n$, where $n$ is the number of vertices in $\Gamma$. We design a method, by using the property of spectral theory, to find other graphs having same eigenvectors along with their eigenvalues. As an application, we characterize weighted cubelike graphs having eigenvectors same as cubelike graphs.

\subsection{The characterization}
\begin{theorem}\label{thm:main}
Let $P$ be an $n\times n$ orthogonal matrix. Then, there exists a set of $n$ paired indices
\[
\mathcal{I}=\{(r_j,c_j):\;1\leq j\leq n,\;1\leq r_j,c_j\leq n\}
\]
such that 
\begin{enumerate}
    \item the $n\times n$ matrix $Q$, whose $j$-th row is the Hadamard product of $r_j$-th row and $c_j$-th row of $P$, denoted by \[Q_{j*}=P_{r_j*}\odot P_{c_j*},\qquad 1\leq j\leq n,\] is invertible,
    \item for each $Z\in\mathbb{R}^{n\times 1}$, there is a weighted graph having eigenvectors $P$ and eigenvalues $Q^{-1}Z$.
\end{enumerate}
In other words, the triplet $(P,\mathcal{I},Z)$, with fixed $P$, determines all graphs having eigenvectors $P$ along with its eigenvalues.
\end{theorem}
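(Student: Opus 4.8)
The plan is to reduce the whole statement to a single linear-algebra fact: the family of $n^2$ vectors obtained as Hadamard products of pairs of rows of $P$ spans all of $\mathbb{R}^n$. Once that is established, part~1 follows by extracting a basis and part~2 is a one-line computation, so I will spend the effort on the spanning fact.

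First I would set up notation: for $1 \le r,c \le n$ let $v_{(r,c)} := P_{r*}\odot P_{c*} \in \mathbb{R}^n$, the vector whose $k$-th coordinate is $P_{r,k}P_{c,k}$. The key claim is $\mathrm{span}\{v_{(r,c)} : 1\le r,c\le n\} = \mathbb{R}^n$. To prove it I would take any $w\in\mathbb{R}^n$ orthogonal to every $v_{(r,c)}$ and observe that $w\cdot v_{(r,c)} = \sum_k w_k P_{r,k}P_{c,k}$ is precisely the $(r,c)$ entry of the matrix $M := P\,\mathrm{diag}(w)\,P^T$; so all entries of $M$ vanish, i.e.\ $M=0$, and since $P$ is invertible this forces $\mathrm{diag}(w)=0$, hence $w=0$. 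Thus the $v_{(r,c)}$ span $\mathbb{R}^n$. Next I would build $\mathcal{I}$ greedily: repeatedly pick a pair $(r_j,c_j)$ whose vector $v_{(r_j,c_j)}$ lies outside the span of those already chosen; by the spanning claim this process runs for exactly $n$ steps and produces $n$ linearly independent rows, so the matrix $Q$ with $Q_{j*}=v_{(r_j,c_j)}$ is invertible — this is part~1.

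For part~2, given $Z\in\mathbb{R}^{n\times1}$ I would set $\Lambda := Q^{-1}Z$ and take $A := P\,\mathrm{diag}(\Lambda)\,P^T = \sum_{k=1}^n \Lambda_k\ket{P_{*k}}\bra{P_{*k}}$. Because $P$ is real orthogonal and $\Lambda$ is real, $A$ is a real symmetric matrix, hence the adjacency matrix of a weighted graph (with loops allowed), and by the spectral theorem its eigenvectors are the columns of $P$ with eigenvalues $\Lambda = Q^{-1}Z$, which is part~2. As a consistency check I would note $A_{r_j,c_j} = v_{(r_j,c_j)}\cdot\Lambda = (Q\Lambda)_j = Z_j$, so $Z$ is exactly the tuple of weights of $A$ at the positions named by $\mathcal{I}$. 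The final ``in other words'' clause then follows because every weighted graph with eigenvectors $P$ has adjacency matrix $P\,\mathrm{diag}(\Lambda')\,P^T$ for some real $\Lambda'$, and $Z':=Q\Lambda'$ realizes it as the graph attached to $(P,\mathcal{I},Z')$.

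I do not expect a genuine obstacle beyond the spanning claim, which is the heart of the matter. The one subtlety to keep in mind is that diagonal positions $r_j=c_j$ may be unavoidable in $\mathcal{I}$ — for instance if $P=I$, every off-diagonal Hadamard product vanishes — so the theorem must be read in the category of weighted graphs \emph{with loops}; I would make sure the statement and the greedy construction never silently assume a loop-free graph.
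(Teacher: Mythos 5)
Your proposal is correct and takes essentially the same route as the paper: both rest on the fact that the Hadamard products of row pairs of $P$ have rank $n$ (the paper phrases this as linear independence of the columns $P_{*j}\otimes P_{*j}$ of the stacked matrix $S$, while you prove the equivalent dual statement that $w\mapsto P\,\mathrm{diag}(w)\,P^{T}=0$ forces $w=0$), then extract $n$ independent rows to get the invertible $Q$ and define $A=P\,\mathrm{diag}(Q^{-1}Z)\,P^{T}$. Your closing caveat that diagonal indices, hence loops, may be unavoidable (e.g.\ $P=I$) is a correct reading and matches the paper's own remark that the graphs produced are in general weighted graphs with loops.
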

\begin{proof}
Define the matrix $S$ by
\[
S = \begin{bmatrix} P_{*1}\otimes P_{*1} & \cdots & P_{*n}\otimes P_{*n} \end{bmatrix},
\]
where the $j$-th column $S_{*j}=P_{*j}\otimes P_{*j}$ ($1\leq j\leq n$) is the tensor product of $j$-th column of $P$ with iteself. Since $P$ is orthogonal, the columns of $S$ form a set of $n$ linearly independent vectors. Thus, there is a set of $n$ linearly independent rows in $S$. Suppose rows corresponding to $n$ paired indices $\mathcal{I}=\{(r_j,c_j):\;1\leq j\leq n,\;1\leq r_j,c_j\leq n\}$ are linearly independent, then the submatrix $Q$ given by
\begin{equation}\label{eq:Q}
Q=\begin{bmatrix}
    P_{r_1,1}P_{c_1,1} & \cdots & P_{r_1,n}P_{c_1,n} \\
    \vdots & \ddots & \vdots \\
    P_{r_n,1}P_{c_n,1} & \cdots & P_{r_n,n}P_{c_n,n}
    \end{bmatrix},
\end{equation}
is invertible. Let $Z\in\mathbb{R}^{n\times 1}$ and $X=Q^{-1}Z$. We now construct a real-symmetric matrix $A$ associated with $Z$. Assume, entries in $Z$ correspond to $n$ entries in $A$ corresponding to the paired index set $\mathcal{I}$, viz.,
\[
Z = \begin{bmatrix}
A_{r_1,c_1} \\ \vdots \\ A_{r_n,c_n}
\end{bmatrix}
\] 
Then, the system of linear equations
\begin{equation}\label{eq:QX=Z}
    QX = Z
\end{equation}
describes the following system of linear equations
\begin{equation}\label{eq:SX=Y}
    SX = Y,\mbox{ where }Y=\begin{bmatrix} A_{*1} \\ \vdots \\ A_{*n} \end{bmatrix},\; A_{*j}=\begin{bmatrix} A_{1,j} \\ \vdots \\ A_{n,j} \end{bmatrix}.
\end{equation}    
That is, the solution for $X$ in Eq.~\ref{eq:QX=Z} and Eq.~\ref{eq:SX=Y} are same. Eq.~\ref{eq:SX=Y} is an alternate representation for $A=PDP^T$, where $D$ is diagonal with diagonal entries $X$, viz.,
\begin{equation}
\begin{split}
    A_{i,j} &= \sum_{k=1}^nX_{k,1}P_{i,k}P_{j,k},\quad 1\leq i,j\leq n \\
    &= \sum_{k=1}^nX_{k,1}P_{j,k}P_{i,k} \\
    &= A_{j,i}. \\
    \implies A &= PDP^T, \qquad\mbox{where } D_{k,k}=X_{k,1}.
\end{split}
\end{equation}
Thus, $A$ is determined uniquely for each $Z$. It is real and symmetric matrix with eigenvectors $P$ and eigenvalues $X=Q^{-1}Z$. In other words, the graph whose adjacency matrix is $A$ has eigenvectors $P$ and eigenvalues $Q^{-1}Z$.
\end{proof}

\begin{lemma}\label{lem:nonzerorow}
With the assumptions from Theorem~\ref{thm:main}, if a row of $P$ has no zero entry, say the $k$-th row, then the invertible matrix $Q$ is given by
\[
Q=\begin{bmatrix} P_{k,1}P_{*1} & \cdots & P_{k,n}P_{*n} \end{bmatrix}.
\]
\end{lemma}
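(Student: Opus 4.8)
The plan is to show that the $n$ paired indices $(r_j,c_j)=(k,j)$ for $1\le j\le n$ satisfy the linear-independence hypothesis needed in Theorem~\ref{thm:main}, so that one valid choice of $Q$ is exactly the displayed matrix. Recall from the proof of Theorem~\ref{thm:main} that the $j$-th column of $S$ is $S_{*j}=P_{*j}\otimes P_{*j}$, and its rows are indexed by ordered pairs $(r,c)$ with the $(r,c)$-row equal to $\bigl(P_{r,1}P_{c,1},\ldots,P_{r,n}P_{c,n}\bigr)$. So the question is purely: which $n$ rows of $S$ are linearly independent? Lemma~\ref{lem:nonzerorow} claims that if row $k$ of $P$ has no zero entry, then the $n$ rows of $S$ indexed by $(k,1),(k,2),\ldots,(k,n)$ work, and these assemble into $Q=\bigl[P_{k,1}P_{*1}\ \cdots\ P_{k,n}P_{*n}\bigr]$ (note the $j$-th column of this $Q$ is the scalar $P_{k,j}$ times the $j$-th column $P_{*j}$ of $P$).

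First I would write $Q$ in factored form. The $(i,j)$-entry of the candidate matrix is $P_{i,j}P_{k,j}$ (row index $i$ from the pair $(k,i)$, column index $j$); hence $Q = P\,\mathrm{diag}(P_{k,1},\ldots,P_{k,n})$, i.e.\ $Q=P\,D_k$ where $D_k$ is the diagonal matrix carrying the $k$-th row of $P$ on its diagonal. This is the key structural observation and it makes everything routine: $\det Q=\det P\cdot\prod_{j=1}^n P_{k,j}$. Since $P$ is orthogonal, $\det P=\pm1\neq0$, and by hypothesis every entry $P_{k,j}$ of the $k$-th row is nonzero, so the product is nonzero; therefore $\det Q\neq0$ and $Q$ is invertible. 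Finally, I would verify that this $Q$ is consistent with the construction in Theorem~\ref{thm:main}, i.e.\ that rows $(k,j)$, $1\le j\le n$, of $S$ are indeed linearly independent: this is immediate because stacking them gives precisely $Q$, which we have just shown is invertible, so the theorem's conclusion applies verbatim with $\mathcal{I}=\{(k,j):1\le j\le n\}$.

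There is essentially no obstacle here; the only thing to be careful about is bookkeeping of which index of the pair $(k,j)$ plays the role of the matrix row and which the column, and to match the paper's row-indexing convention for $S$ so that the stated formula $Q=\bigl[P_{k,1}P_{*1}\ \cdots\ P_{k,n}P_{*n}\bigr]$ comes out with the columns (rather than rows) scaled by $P_{k,j}$. One should also note for completeness that the existence of a zero-free row is not guaranteed for an arbitrary orthogonal $P$ — the lemma is a convenient sufficient condition, not a claim that it always applies — but when it does apply it gives an explicit, easily invertible $Q$ without searching for a linearly independent set of rows of $S$. I would close by remarking that this is exactly the situation for the normalized character table of $\mathbb{Z}_2^n$, whose entries are all $\pm 2^{-n/2}$, so Lemma~\ref{lem:nonzerorow} is what will be invoked in the cubelike-graph application.
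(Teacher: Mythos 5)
Your proposal is correct and takes essentially the same route as the paper: both arguments rest on the observation that the $j$-th column of $Q$ is the nonzero scalar $P_{k,j}$ times the column $P_{*j}$ of $P$. The paper verifies invertibility by checking linear independence of these scaled columns directly, while you package the same fact as the factorization $Q=P\,\mathrm{diag}(P_{k,1},\dots,P_{k,n})$ and a determinant computation; the index bookkeeping in your write-up is consistent with the paper's construction of $Q$ from the rows of $S$.
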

\begin{proof}
Since columns of $P$ form linearly independent set, so if
\[
\sum_{j=1}^nc_j P_{k,j}P_{*j} = 0,
\]
then $c_jP_{k,j}=0$, for all $1\leq j\leq n$. But, $P_{k,j}\neq 0$ implies $c_j=0$. Hence, the columns of $Q$ are linearly independent, i.e., $Q$ is invertible.
\end{proof}
\begin{lemma}\label{lem:onerow}
With the assumptions from Theorem~\ref{thm:main}, if a row of $P$ has constant entries $\begin{bmatrix} \mu & \cdots & \mu \end{bmatrix}$, then the matrix $Q$, given by $Q=\mu P$, is an invertible submatrix of $S$, and the graph associated with $Z\in\mathbb{R}^{n\times 1}$ has eigenvalues $X=Q^{-1}Z=\frac{1}{\mu} P^TZ$.
\end{lemma}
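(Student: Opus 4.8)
The plan is to recognize this statement as the special case of Lemma~\ref{lem:nonzerorow} in which the distinguished zero-free row of $P$ is in addition constant, and then to invert $Q$ explicitly using the orthogonality of $P$. The first thing I would check is that $\mu\neq 0$: since $P$ is orthogonal, each of its rows is a unit vector, so a row of the form $\begin{bmatrix}\mu & \cdots & \mu\end{bmatrix}$ must satisfy $n\mu^{2}=1$, forcing $\mu=\pm 1/\sqrt{n}$. In particular this row --- call it the $k$-th --- has no zero entry, which is exactly the hypothesis needed to apply the previous lemma.

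Having established that, I would invoke Lemma~\ref{lem:nonzerorow} directly. It produces the invertible matrix $Q=\begin{bmatrix} P_{k,1}P_{*1} & \cdots & P_{k,n}P_{*n}\end{bmatrix}$, and substituting $P_{k,j}=\mu$ collapses this to $Q=\begin{bmatrix}\mu P_{*1} & \cdots & \mu P_{*n}\end{bmatrix}=\mu P$, which is invertible because $P$ is and $\mu\neq0$. To see that $Q$ is genuinely a submatrix of $S$ in the sense of Theorem~\ref{thm:main}, I would take the paired-index set $\mathcal{I}=\{(k,j):1\leq j\leq n\}$: the $j$-th prescribed row $P_{k*}\odot P_{j*}$ has $\ell$-th entry $P_{k,\ell}P_{j,\ell}=\mu P_{j,\ell}$, so stacking these rows for $j=1,\dots,n$ recovers $\mu P=Q$; and each such row is precisely the row of $S$ indexed, in the Kronecker ordering of $S_{*j}=P_{*j}\otimes P_{*j}$, by the pair $(k,j)$. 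Since $Q=\mu P$ is invertible these $n$ rows of $S$ are linearly independent, so $\mathcal{I}$ is an admissible choice.

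Finally, I would feed this $Q$ into the conclusion of Theorem~\ref{thm:main}: the graph associated with $Z\in\mathbb{R}^{n\times1}$ has eigenvectors $P$ and eigenvalues $X=Q^{-1}Z=(\mu P)^{-1}Z=\tfrac{1}{\mu}P^{-1}Z=\tfrac{1}{\mu}P^{T}Z$, where the last equality uses $P^{-1}=P^{T}$. I do not expect any real obstacle here: the only slightly delicate point is keeping the Kronecker row-indexing of $S$ aligned with the paired-index notation $\mathcal{I}$, and all of the mathematical substance is already carried by Lemma~\ref{lem:nonzerorow} and Theorem~\ref{thm:main}.
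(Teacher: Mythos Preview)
Your proof is correct and follows essentially the same route as the paper: establish $\mu\neq 0$, invoke Lemma~\ref{lem:nonzerorow} to get $Q=\mu P$ invertible, and then read off $X=\tfrac{1}{\mu}P^{T}Z$ from $P^{-1}=P^{T}$. The paper's own argument is a two-line version of yours (it deduces $\mu\neq 0$ simply from the invertibility of $P$ rather than the unit-norm computation, and omits the explicit $\mathcal{I}$ bookkeeping), but the strategy is identical.
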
 
\begin{proof}
Since $P$ is invertible, $\mu\neq 0$. By Lemma~\ref{lem:nonzerorow}, $Q=\mu P$ is invertible and the result follows.
\end{proof}

\subsection{Characterization of weighted cubelike graphs}
For an application of Theorem~\ref{thm:main} to graph theory, notice that given an $n\times n$ orthogonal matrix $P$ we can construct a graph $\Gamma$ by giving partial information about it. In particular, for the $n$ paired indices $\mathcal{I}=\{(r_j,c_j):\;1\leq j\leq n,\;1\leq r_j,c_j\leq n\}$, if we assign weights to edges $(r_j,c_j)$, then all other edges can be generated by applying Eqs.~\ref{eq:QX=Z} and \ref{eq:SX=Y}. Here, we have assumed that vertices are labeled by integers $\{1,2,\dots,n\}$. The graphs obtained by this method are usually weighted graphs with loops. It is of interest to study the case where $P$ and $Z$ lead to unweighted graphs. For an instance, Lemma~\ref{lem:onerow} can be used to characterize cubelike graphs. It is to be noted that eigenvectors and eigenvalues have already been characterized for cubelike graphs using representation theory of finite groups~\cite{babai1979,benjamin2012}. We present the same result without using the representation theory.

\subsubsection{Cubelike graphs}
\begin{theorem}\label{thm:cubelike}
Let $n=2^d$, for some positive integer $d$. Let $P$ be an $n\times n$ orthogonal matrix, defined by
\begin{equation}\label{eq:P}
P_{i,j} = \frac{1}{\sqrt{n}}(-1)^{\braket{b(i)|b(j)}},\qquad 1\leq i,j\leq n,
\end{equation}
where $b(k)$ denotes the $k$-th binary string in $\mathbb{Z}_2^d$. Let $Z\in\mathbb{Z}_2^{n\times 1}$, such that 
\[\Omega=\{b(l):1\leq l\leq n,\;Z_{l,1}=1\}\]
is a linearly independent subset of $\mathbb{Z}_2^d$. Then, the graph associated with $Z$ is isomorphic to the cubelike graph $Cay(\mathbb{Z}_2^d,\Omega)$ with eigenvalues $\sqrt{n}PZ$.
\end{theorem}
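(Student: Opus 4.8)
The plan is to apply Lemma~\ref{lem:onerow} with the orthogonal matrix $P$ given by Eq.~\ref{eq:P}. First I would observe that the first row of $P$ (corresponding to the zero binary string $b(1)=0\cdots0$) is the constant row $\begin{bmatrix}\mu&\cdots&\mu\end{bmatrix}$ with $\mu=\tfrac{1}{\sqrt n}$, since $\braket{b(1)|b(j)}=0$ for all $j$. Hence Lemma~\ref{lem:onerow} applies and the graph $A$ associated with $Z$ (where $Z$ prescribes the first column $A_{*1}$, i.e.\ the entries $A_{1,j}$) has eigenvectors $P$ and eigenvalues $X=Q^{-1}Z=\tfrac{1}{\mu}P^TZ=\sqrt n\,P^TZ$; and since $P$ is symmetric ($P_{i,j}=P_{j,i}$ because $\braket{b(i)|b(j)}$ is symmetric), this equals $\sqrt n\,PZ$, matching the claimed eigenvalues. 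The reconstructed matrix is $A=PDP^T$ with $D=\mathrm{diag}(X)$.

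Next I would compute the entries of $A$ explicitly and identify it as a Cayley graph adjacency matrix. Writing $X_{k,1}=\sqrt n\sum_{l}P_{k,l}Z_{l,1}=\sum_{l:\,b(l)\in\Omega}(-1)^{\braket{b(k)|b(l)}}$, we get
\[
A_{i,j}=\sum_{k=1}^n X_{k,1}P_{i,k}P_{j,k}=\frac1n\sum_{k}\Big(\sum_{b(l)\in\Omega}(-1)^{\braket{b(k)|b(l)}}\Big)(-1)^{\braket{b(i)|b(k)}+\braket{b(j)|b(k)}}.
\]
Interchanging the sums and using $\braket{b(i)|b(k)}+\braket{b(j)|b(k)}\equiv\braket{b(i)\oplus b(j)|b(k)}\pmod 2$, the inner sum over $k$ becomes $\frac1n\sum_k(-1)^{\braket{b(l)\oplus b(i)\oplus b(j)|b(k)}}$, which is the standard character-orthogonality sum on $\mathbb{Z}_2^d$: it equals $1$ if $b(l)=b(i)\oplus b(j)$ and $0$ otherwise. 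Therefore $A_{i,j}=1$ exactly when $b(i)\oplus b(j)\in\Omega$ and $0$ otherwise — that is, $A$ is the adjacency matrix of $Cay(\mathbb{Z}_2^d,\Omega)$ under the identification of vertex $k$ with $b(k)$. (Here I use that $\Omega\subseteq\mathbb{Z}_2^d$ is automatically closed under inverses since every element of $\mathbb{Z}_2^d$ is its own inverse, and that $0\notin\Omega$ because a linearly independent set cannot contain the zero vector, so $Cay(\mathbb{Z}_2^d,\Omega)$ is a well-defined simple graph with no loops.)

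Finally I would note where the linear-independence hypothesis on $\Omega$ is actually used. The computation above shows that for \emph{any} $Z\in\mathbb{Z}_2^{n\times1}$ the associated matrix $A$ is the adjacency matrix of $Cay(\mathbb{Z}_2^d,\Omega)$; linear independence of $\Omega$ is not needed to get a cubelike graph, but it is the natural nondegeneracy condition guaranteeing that the $d$ prescribed edges genuinely determine the generating set (and, more relevantly for the sequel, it is the hypothesis under which the PST/periodicity analysis of the next section is carried out). I expect the only mildly delicate point to be the bookkeeping in the character sum — keeping straight the reindexing between $\{1,\dots,n\}$ and $\mathbb{Z}_2^d$ and the mod-$2$ bilinearity of $\braket{\cdot|\cdot}$ — but this is a routine orthogonality-of-characters argument and presents no real obstacle. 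The eigenvalue formula $\sqrt n\,PZ$ then drops out immediately from the $X=\sqrt n\,P^TZ=\sqrt n\,PZ$ identity established in the first step.
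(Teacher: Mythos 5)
Your proposal is correct and follows essentially the same route as the paper's proof: the eigenvalue formula $X=\sqrt{n}PZ$ (which the paper asserts directly using $P=P^T$, and you justify slightly more explicitly via Lemma~\ref{lem:onerow} applied to the constant first row of $P$) followed by the same character-orthogonality computation showing $A_{i,j}=1$ iff $b(i)\oplus b(j)\in\Omega$. Your side remarks --- that linear independence of $\Omega$ is only needed to exclude the zero string (no loops) and is otherwise unused --- are accurate and consistent with the paper, which likewise does not invoke it in the computation.
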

\begin{proof}
Suppose $A$ is the adjacency matrix of the graph $\Gamma$ associated with $Z$, and put $X=\sqrt{n}PZ$ since $P=P^T$. Then,
\[
\begin{split}
A_{i,j} 
&= \sum_{k=1}^nX_{k,1}P_{i,k}P_{j,k} \\
&= \sum_{k=1}^n\left(\sqrt{n}\sum_{l=1}^nP_{l,k}Z_{l,1}\right)P_{i,k}P_{j,k} \\
&= \sqrt{n}\sum_{k=1}^n\sum_{l=1}^nZ_{l,1} \frac{(-1)^{\braket{b(l)|b(k)}}}{\sqrt{n}} \frac{(-1)^{\braket{b(i)|b(k)}}}{\sqrt{n}}\frac{(-1)^{\braket{b(j)|b(k)}}}{\sqrt{n}} \\
&= \frac{1}{n}\sum_{l=1}^n Z_{l,1} \left(\sum_{k=1}^n (-1)^{\braket{b(l)\oplus b(i) \oplus b(j)|b(k)}} \right) \\
&= \begin{cases} \frac{1}{n} \sum_{l=1}^n Z_{l,1} \;;&\mbox{if }b(l)\oplus b(i)\oplus b(j) = b(1), \\ \frac{1}{n}\sum_{l=1}^n Z_{l,1}\times 0 \;;& \mbox{otherwise}. \end{cases}
\end{split}
\]
Since $b(i)\oplus b(j)\in\mathbb{Z}_2^n$, we get $b(i)\oplus b(j)=b(h)$ for unique $l=h$. Therefore,
\[
A_{i,j} = \begin{cases} 1; & \mbox{if }l=h \mbox{ and }Z_{h,1}=1\\ 0; & \mbox{otherwise}. \end{cases}
\]
This is equivalent to say that $i$-th vertex is adjacent to $j$-th vertex only if $b(i)\oplus b(j)=b(l)$ and $Z_{l,1}=1$. Now, define a subset of $\mathbb{Z}_2^d$ by, $\Omega = \{b(l):\;1\leq l\leq n,\;Z_{l,1}=1\}$. Clearly, $\Gamma$ is isomorphic to the Cayley graph $Cay(\mathbb{Z}_2^d,\Omega)$ because the following statements are equivalent;
\begin{enumerate}
    \item Two vertices $i$ and $j$ are adjacent in $\Gamma$ only if $Z_{h,1}=1$, where $b(h)=b(i)\oplus b(j)$.
    \item Two vertices $b(i)$ and $b(j)$ are adjacent in $Cay(\mathbb{Z}_2^d,\Omega)$ only if $b(i)\oplus b(j)\in\Omega$.
\end{enumerate}
\end{proof}

\subsubsection{Weighted cubelike graphs}
The assignment of weights to cubelike graphs is such that its eigenvectors remain unchanged, which is represented by Eq.~\ref{eq:P} in Theorem~\ref{thm:cubelike}. This can be achieved by applying Theorem~\ref{thm:main} over eigenvectors of cubelike graphs. Let $Z\in\mathbb{R}^{n\times 1}$, then the associated graph has  eigenvalues $X=\sqrt{n}PZ$, by Lemma~\ref{lem:onerow}. We show that if entries in $Z$ are integers with its first entry zero, then the associated graph $\Gamma$ is isomorphic to a weighted cubelike graph, i.e., if $\Omega=\{b(l):1\leq l\leq n,\;Z_{l,1}\neq 0\}$ then two vertices $u$ and $v$ are adjacent in $\Gamma$ iff $b(u)$ and $b(v)$ are adjaceny in the Cayley graph $Cay(\mathbb{Z}_2^d,\Omega)$.

\begin{lemma}
Suppose $P$ is the matrix from Eq~\ref{eq:P} and $Z\in\mathbb{R}^{n\times 1}$. Let $\Gamma$ be the graph associated with $Z$.
\begin{enumerate}
    \item If $Z_{1,1}=0$, then $\Gamma$ has no loop.
    \item If $Z\in\mathbb{Z}^{n\times 1}$, then eigevalues of $\Gamma$ are integers with same parity.
\end{enumerate}
\end{lemma}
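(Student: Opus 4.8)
The plan is to read both parts straight off the spectral identity $A=PDP^{T}$ with $D_{k,k}=X_{k,1}$. By Lemma~\ref{lem:onerow} applied to the first row of $P$, which is the constant vector $\begin{bmatrix}1/\sqrt n & \cdots & 1/\sqrt n\end{bmatrix}$ (here $b(1)$ is the zero string of $\mathbb{Z}_2^d$, as in Theorem~\ref{thm:cubelike}, so $P_{1,j}=\tfrac{1}{\sqrt n}$ for all $j$), and using $P=P^{T}$, the eigenvalue vector is $X=\sqrt n\,PZ$. Unwinding Eq.~\ref{eq:P} this gives the closed form
\[
X_{k,1}=\sum_{l=1}^{n}(-1)^{\braket{b(k)|b(l)}}Z_{l,1},\qquad 1\le k\le n,
\]
which does all the work for part~(2); for part~(1) I will also need the diagonal entries $A_{i,i}$.

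For part~(1): every entry of $P$ equals $\pm 1/\sqrt n$, so $P_{i,k}^{2}=1/n$ and hence
\[
A_{i,i}=\sum_{k=1}^{n}X_{k,1}P_{i,k}^{2}=\frac1n\sum_{k=1}^{n}X_{k,1}.
\]
It then remains to evaluate $\sum_{k}X_{k,1}$. Substituting $X=\sqrt n\,PZ$ and interchanging the two summations, the computation collapses once we note the column sums $\sum_{k=1}^{n}P_{k,l}=\tfrac{1}{\sqrt n}\sum_{k}(-1)^{\braket{b(k)|b(l)}}$, which equal $\sqrt n$ when $b(l)$ is the zero string and $0$ otherwise (the Walsh--Hadamard character sum already used in the proof of Theorem~\ref{thm:cubelike}). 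Thus $\sum_{k}X_{k,1}=nZ_{1,1}$, so $A_{i,i}=Z_{1,1}$ for every $i$, and $Z_{1,1}=0$ forces $\Gamma$ to be loopless.

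For part~(2): if $Z\in\mathbb{Z}^{n\times1}$, the displayed formula for $X_{k,1}$ exhibits each eigenvalue as a signed sum of the integers $Z_{l,1}$, hence an integer. For the parity statement, reduce that formula modulo $2$: since $(-1)^{\braket{b(k)|b(l)}}\equiv 1\pmod 2$ regardless of the exponent, $X_{k,1}\equiv\sum_{l=1}^{n}Z_{l,1}\pmod 2$ for every $k$, so all eigenvalues are congruent to $\sum_{l}Z_{l,1}$ modulo $2$ and in particular share a common parity.

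None of this is a genuine obstacle — the lemma is essentially bookkeeping layered on top of Lemma~\ref{lem:onerow}. The only step that is more than a one-line observation is the column-sum (character-sum) identity invoked in part~(1), and even that is the standard hypercube character orthogonality already implicit in the computation of $A_{i,j}$ in Theorem~\ref{thm:cubelike}.
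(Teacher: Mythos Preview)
Your argument is correct and follows the same route as the paper: compute $A_{i,i}=\tfrac1n\sum_k X_{k,1}$ via $P_{i,k}^2=1/n$, collapse it to $Z_{1,1}$ using the character sum, and then read integrality and parity directly off $X_{k,1}=\sum_l(-1)^{\braket{b(k)|b(l)}}Z_{l,1}$. The only cosmetic difference is that the paper phrases the parity step as ``$\lambda_k-\lambda_1$ is even'' (each term $(-1)^{\braket{b(k)|b(l)}}-1$ being $0$ or $-2$), whereas you reduce the whole sum modulo $2$; these are the same observation.
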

\begin{proof}
Suppose $A$ is the adjacency matrix of $\Gamma$. Then, 
\[
\begin{split}
    A_{i,i} &= \sum_{k=1}^nX_{k,1}P_{i,k}P_{i,k} = \frac{1}{n}\sum_{k=1}^nX_{k,1} \\
    &= \frac{1}{n}\sum_{k=1}^n\sum_{l=1}^nZ_{l,1}(-1)^{\braket{b(l)|b(k)}} \\
    &= \frac{1}{n}\sum_{l=1}^nZ_{l,1}\sum_{k=1}^n(-1)^{\braket{b(l)|b(k)}} \\
    &= \frac{1}{n}Z_{1,1} = 0 .
\end{split}
\]
Next, assume that $Z$ has integer entries. Then, 
\begin{equation}\label{eq:eigen}
X_{k,1} = \sum_{l=1}^n(-1)^{\braket{b(k)|b(l)}}Z_{l,1},\qquad 1\leq k\leq n,
\end{equation}
are integers. Since $\braket{b(1)|b(l)}=0$, $1\leq l\leq n$, $X_{1,1}$ is odd (or even) if the number of odd entries in $Z$ are odd (or even). The difference $\lambda_k-\lambda_1$ is even because it contains even number of entries from $Z$. Therefore, $\lambda_k$ and $\lambda_1$ have same parity.
\end{proof}

\section{Perfect state transfer}
Cubelike graphs of same dimension have a common eigenvectors and admit perfect state transfer or periodicity with time $\frac{\pi}{2}$. We give a general result for perfect state transfer in weighted cubelike graphs having same eigenvectors. Then after, we verify the result numerically.

\subsection{PST in weighted cubelike graphs}
\begin{theorem}\label{thm:pst}
Let $n=2^d$, for some positive integer $d$. Let $P$ be the $n\times n$ orthogonal matrix defined by
\[
P_{i,j}=\frac{1}{\sqrt{n}}(-1)^{\braket{b(i)|b(j)}}, \qquad 1\leq i,j\leq n.
\]
If $Z\in\mathbb{Z}^{n\times 1}$ with its first entry zero, then the associated graph admits perfect state transfer or is periodic, with time $\frac{\pi}{2}$ in either cases.
\end{theorem}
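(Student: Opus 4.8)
The plan is to analyze the transition matrix $\mathcal{U}(\pi/2) = e^{\iota (\pi/2) A}$ using the spectral decomposition $A = PDP^T$ with $D_{k,k} = X_{k,1}$, where by Lemma~\ref{lem:onerow} (applied to the constant first row of $P$) we have $X = \sqrt{n}\,PZ$. The key structural fact, already established in the preceding lemma, is that all eigenvalues $X_{k,1}$ are integers of the \emph{same parity}. Write $\mathcal{U}(\pi/2)_{u,v} = \sum_{k=1}^n e^{\iota (\pi/2) X_{k,1}} P_{u,k} P_{v,k}$. Since $e^{\iota (\pi/2) m}$ depends only on $m \bmod 4$, and since all $X_{k,1}$ are congruent mod $2$, I would split into two cases according to the common parity.

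\emph{Case 1: all eigenvalues even.} Then $e^{\iota(\pi/2)X_{k,1}} = (-1)^{X_{k,1}/2} = \iota^{X_{k,1}}$ is real, equal to $\pm 1$, and determined by $X_{k,1} \bmod 4$. I would show $\mathcal{U}(\pi/2) = \sum_k (-1)^{X_{k,1}/2} \ket{P_{*k}}\bra{P_{*k}}$ is a real symmetric matrix whose entries, by the same Fourier-over-$\mathbb{Z}_2^d$ computation as in the proof of Theorem~\ref{thm:cubelike}, are supported on a single ``shift'' $\sigma \in \mathbb{Z}_2^d$ — i.e. $\mathcal{U}(\pi/2)_{u,v} = \pm 1$ exactly when $b(u)\oplus b(v) = \sigma$ and $0$ otherwise. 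Concretely, $(-1)^{X_{k,1}/2}$ should itself be expressible as $(-1)^{\braket{b(k)|\sigma}}$ for a fixed $\sigma$ (this is where the linear-algebra-over-$\mathbb{F}_2$ structure of the eigenvalue formula \eqref{eq:eigen} enters), and then the sum over $k$ collapses by orthogonality of characters. If $\sigma = b(1)$ (the zero string) this gives $\mathcal{U}(\pi/2) = \pm I$, i.e. periodicity at every vertex with period $\pi/2$; if $\sigma \neq b(1)$ it gives $\mathcal{U}(\pi/2)\ket{u} = \pm\ket{v}$ where $b(v) = b(u)\oplus\sigma$, i.e. PST between every such pair.

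\emph{Case 2: all eigenvalues odd.} Then $e^{\iota(\pi/2)X_{k,1}} = \iota \cdot \iota^{X_{k,1}-1} = \iota\cdot(-1)^{(X_{k,1}-1)/2}$, so $\mathcal{U}(\pi/2) = \iota \sum_k (-1)^{(X_{k,1}-1)/2}\ket{P_{*k}}\bra{P_{*k}}$, which is $\iota$ times a real symmetric matrix. Running the identical argument on the real part, I would again find its entries are supported on a single shift $\sigma'$, giving either $\mathcal{U}(\pi/2) = \pm\iota I$ (periodicity with period $\pi/2$, since $|{\pm\iota}| = 1$) or $\mathcal{U}(\pi/2)\ket{u} = \pm\iota\ket{v}$ with $b(v) = b(u)\oplus\sigma'$ (PST). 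In both cases the time is $\pi/2$ and the scalar has modulus $1$, which is all the definitions of PST and periodicity require.

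The main obstacle is the step asserting that $(-1)^{\lfloor X_{k,1}/2 \rfloor}$ (or the analogous sign in the odd case), as a function of $k$, equals $(-1)^{\braket{b(k)|\sigma}}$ for some \emph{fixed} $\sigma \in \mathbb{Z}_2^d$ — equivalently, that the map $k \mapsto X_{k,1} \bmod 4$, suitably normalized, is $\mathbb{F}_2$-linear in $b(k)$. This should follow by writing \eqref{eq:eigen} as $X_{k,1} = \sum_l (-1)^{\braket{b(k)|b(l)}} Z_{l,1} = (\text{sum of } Z_{l,1}) - 2\sum_{l:\,\braket{b(k)|b(l)}=1} Z_{l,1}$, reducing mod $4$, and checking additivity under $b(k) \mapsto b(k)\oplus b(k')$ using bilinearity of the inner product; the cross term is $4\sum_{l} [\![\braket{b(k)|b(l)}=1]\!][\![\braket{b(k')|b(l)}=1]\!] Z_{l,1} \equiv 0 \pmod 4$, which is exactly what makes the quadratic form linear mod $4$ relative to its value at $0$. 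Once that reduction is in hand, the character-orthogonality collapse and the case split are routine, and I would present the argument in the two parity cases above, folding the periodic sub-case ($\sigma = 0$) and the PST sub-case ($\sigma \neq 0$) under one computation.
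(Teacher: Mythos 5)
Your proposal is correct, but it proves the theorem by a genuinely different route than the paper. The paper argues in the reverse direction: it supposes a modulus-one entry $\left|\mathcal{U}(\tau)_{u,v}\right|=1$, uses the triangle-equality condition to force all phases $(-1)^{\braket{b(\sigma)|b(k)}}e^{\iota\tau\lambda_k}$ to coincide, reduces this (using integrality and common parity of the $\lambda_k$) to the parity condition $\frac{\lambda_k-\lambda_1}{2}\equiv\braket{b(\sigma)|b(k)}\pmod 2$ at $\tau=\frac{\pi}{2}$, and then reads off the bits of $b(\sigma)$ from the values $k=2^j+1$; the fact that this $\sigma$ then works for \emph{all} $k$ (the $\mathbb{F}_2$-linearity of $k\mapsto\frac{\lambda_k-\lambda_1}{2}\bmod 2$) is left largely implicit and only surfaces in the remark following the theorem. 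You instead compute $\mathcal{U}(\pi/2)$ head-on: writing $e^{\iota\frac{\pi}{2}\lambda_k}=\iota^{\lambda_1}(-1)^{(\lambda_k-\lambda_1)/2}$ and proving, via the mod-$4$ cross-term argument (equivalently $O_{k\oplus k'}=O_k\,\triangle\,O_{k'}$, so $\lambda_{k\oplus k'}-\lambda_1\equiv(\lambda_k-\lambda_1)+(\lambda_{k'}-\lambda_1)\pmod 4$), that $(-1)^{(\lambda_k-\lambda_1)/2}=(-1)^{\braket{b(k)|\sigma}}$ for a fixed $\sigma$, whence character orthogonality collapses $\mathcal{U}(\pi/2)$ to $\iota^{\lambda_1}$ times the permutation $b(u)\mapsto b(u)\oplus\sigma$. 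This forward computation is the chief advantage of your version: it explicitly establishes sufficiency (that PST or periodicity actually occurs at $\pi/2$, with $\sigma=\bigoplus_{l:\,Z_{l,1}\text{ odd}}b(l)$), precisely the linearity point the paper glosses over, and it simultaneously recovers the paper's remark; the paper's argument is shorter and ties the result to the standard phase-alignment characterization of PST. Your split into even/odd parity cases is harmless but unnecessary, since $\iota^{\lambda_1}$ has modulus one in either case, and note the sign that is linear in $b(k)$ is $(-1)^{(\lambda_k-\lambda_1)/2}$ rather than $(-1)^{\lambda_k/2}$ itself, as your ``suitably normalized'' caveat already acknowledges.
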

\begin{proof}
Suppose, there exists perfect state transfer between the pair $\{u,v\}$ with time $\tau$. Put $b(\sigma)=b(u)\oplus b(v)$, for some $1\leq \sigma \leq n$, and $\lambda_k=X_{k,1}$. Then, $(u,v)$-th entry of the transition matrix at time $\tau$ has absolute value one, i.e.,

\begin{equation*}
    \begin{split}
        \left| \mathcal{U}(\tau)_{u,v} \right|=\left|\sum_{k=1}^ne^{\iota \tau \lambda_k}P_{u,k}P_{v,k} \right| &= 1 \\
        \implies \left| \sum_{k=1}^n(-1)^{\braket{b(\sigma)|b(k)}}e^{\iota \tau \lambda_k} \right| &= n.
    \end{split}
\end{equation*}
This implies,
\begin{equation*}
    \begin{split}
        (-1)^{\braket{b(\sigma)|b(k)}}e^{\iota \tau \lambda_k} &= (-1)^{\braket{b(\sigma)|b(1)}}e^{\iota \tau \lambda_1},\; 1\leq k\leq n. \\
        \implies e^{\iota\tau(\lambda_k-\lambda_1)} &= (-1)^{\braket{b(\sigma)|b(k)}},
    \end{split}
\end{equation*}
which gives,
\begin{equation*}
    \begin{split}
        \tau(\lambda_k-\lambda_1) = \begin{cases} 2m_k\pi,& \mbox{if }\braket{b(\sigma)|b(k)} \mbox{ is even}. \\ (2m_k+1)\pi,& \mbox{if }\braket{b(\sigma)|b(k)}\mbox{ is odd}. \end{cases}
    \end{split}
\end{equation*}
Since, eigenvalues are integers and $\lambda_k-\lambda_1$ is even for each $k$,  $2\tau$ divides $\pi$ and we get
\begin{equation}\label{eq:sigma}
    \frac{\lambda_k-\lambda_1}{2} = \begin{cases} 2m_k,& \mbox{if }\braket{b(\sigma)|b(k)} \mbox{ is even}, \\ (2m_k+1),& \mbox{if }\braket{b(\sigma)|b(k)}\mbox{ is odd}, \end{cases}
\end{equation}
where $m_k$, $1\leq k\leq n$, is an integer. The value of $\sigma$ satisying Eq.~\ref{eq:sigma} can be computed by solving the equation for $k=2^j+1$, $0\leq j\leq d-1$, because $(d-j)$-th bit of $b(\sigma)$ is 1 if and only if $\frac{\lambda_{k}-\lambda_1}{2}$ is odd for $k=2^j+1$. This implies $\sigma$ exists uniquely. Now, if $\sigma=1$ then $u=v$, which implies the graph is periodic at $u$ with period dividing $\frac{\pi}{2}$, and if $\sigma\neq 1$ then $u\neq v$, which implies the graph admits PST in time $\frac{\pi}{2}$.
\end{proof}

In Theorem~\ref{thm:pst}, if $\sigma=1$ then the associated graph is periodic with period $\frac{\pi}{2}$, and if $\sigma\neq 1$ then a pair $\{u,v\}$ is a PST pair, with time $\frac{\pi}{2}$, in the associated graph if and only if $b(\sigma)=b(u)\oplus b(v)$. Notice that, we have nowhere used the fact that the first entry of Z is zero, therefore, the result holds true even if the weighted cubelike graphs have loops. 

\begin{remark}
We can calculate the value of $\sigma$ from $Z$ itself, by applying Theorem~\ref{thm:pst} over cubelike graphs. From Eq.~\ref{eq:eigen}, we get
\[
\begin{split}
    \lambda_k-\lambda_1 &= \sum_{l=1}^n\left[(-1)^{\braket{b(k)|b(l)}}-1 \right]Z_{l,1}, \\
    \implies \frac{\lambda_1-\lambda_k}{2} &= \sum_{l\in O_k}Z_{l,1},
\end{split}
\]
where $O_k=\{l:1\leq l \leq n,\; \braket{b(k)|b(l)}\mbox{ is odd}\}$. Using Eq.~\ref{eq:sigma}, we get 
\[\braket{b(\sigma)|b(k)}\mbox{ is odd } \mbox{ if and only if } \sum_{l\in O_k}Z_{l,1} \mbox{ is odd}.\]
We can further restrict the values of $k$ to $2^j+1,$ where $0\leq j \leq d-1$, and compute $b(\sigma)$.
\end{remark}

\subsection{Numerical results}
We illustrate perfect state transfer in weighted cubelike graphs via Table~\ref{tab:PST}, where weights are integers. We assign random integer entries in $Z$ and study perfect state transfer in the associated weighted cubelike graph with time $\frac{\pi}{2}$. In every graph tested so far, the minimum time at which perfect state transfer occurs is $\frac{\pi}{2}$, and if perfect state transfer does not occur then the graph is periodic with minimum period $\frac{\pi}{2}$. In each case, if the graph admits perfect state transfer then the vertex set is partitioned into PST pairs such that if $\{u,v\}$ and $\{x,y\}$ are two PST pairs, then $b(u)\oplus b(v)=b(x)\oplus b(y)$.

    
    \begin{longtable}{|p{.5cm}|p{3cm}|p{3cm}|p{2cm}|}
    \caption{An illustration of PST in weighted cubelike graphs.\label{tab:PST}}\\
        \hline
        S.I. & Z (Fixed edges) & X (Eigenvalues) & PST pairs \\
         \hline
        1 & {[}0, 1, {-}7, {-}10{]} & {[}{-}16, 2, 18, {-}4{]} & (1, 4), (2, 3) \\
         \hline
        2 & {[}0, 50, {-}10, {-}3{]} & {[}37, {-}57, 63, {-}43{]} & (1, 4), (2, 3) \\
        \hline
        3 & {[}0, 3, 1, 4, {-}6, 0, {-}1, 10{]} & {[}11, {-}23, {-}17, 5, 5, 11, 13, {-}5{]} & (1, 6), (2, 5), (3, 8), (4, 7) \\
        \hline
        4 & {[}0, 2, 3, 4, 5, 6, 5, 4{]} & {[}29, {-}3, {-}3, {-}3, {-}11, {-}3, {-}7, 1{]} & (1, 1), (2, 2), (3, 3), (4, 4), (5, 5), (6, 6), (7, 7), (8, 8) \\
        \hline
        5 & {[}0, {-}72, 38, 93, 100, {-}86, {-}91, {-}42{]} & {[}{-}60, 154, {-}56, 362, 178, {-}120, {-}350, {-}108{]} & (1, 6), (2, 5), (3, 8), (4, 7) \\
        \hline
        6 & {[}0, 5, {-}1, {-}4, {-}1, 5, 3, 2, {-}8, 10, {-}8, {-}4, {-}8, 1, 7, {-}1{]} & {[}{-}2, {-}30, 10, {-}46, {-}18, {-}18, 38, 2, 20, 16, 8, 16, 0, 24, {-}16, {-}4{]} & (1, 9), (2, 10), (3, 11), (4, 12), (5, 13), (6, 14), (7, 15), (8, 16) \\
        \hline
        7 & {[}0, {-}83, {-}80, {-}35, 65, 64, {-}31, {-}50, 94, 5, 97, {-}60, {-}92, {-}25, {-}5, 24{]} & {[}{-}112, 208, 168, 4, {-}12, 360, 20, 116, {-}188, {-}92, 316, 216, {-}480, {-}324, {-}376, 176{]} & (1, 1), (2, 2), (3, 3), (4, 4), (5, 5), (6, 6), (7, 7), (8, 8), (9, 9), (10, 10), (11, 11), (12, 12), (13, 13), (14, 14), (15, 15), (16, 16)\\
        \hline
        8 & {[}0, {-}30, 99, 5, 46, {-}85, {-}19, 100, 83, {-}10, {-}43, {-}4, 59, 60, 29, 22{]} & {[}312, 196, {-}66, 310, {-}112, 160, 38, {-}174, {-}80, 76, {-}442, 62, 176, 64, {-}66, {-}454{]} & (1, 3), (2, 4), (5, 7), (6, 8), (9, 11), (10, 12), (13, 15), (14, 16) \\
        \hline
        9 & {[}0, {-}10, {-}5, 0, {-}7, {-}7, {-}5, 2, {-}1, {-}3, {-}3, {-}9, {-}7, 3, 6, {-}8, {-}5, 5, 0, 4, 3, 9, 2, 10, 1, 7, 8, {-}3, 8, {-}3, {-}2, {-}8{]} & {[}{-}18, 4, 4, {-}22, {-}10, 4, 0, {-}2, 10, {-}64, {-}44, 58, {-}26, 20, 4, 2, {-}90, 16, {-}24, 10, {-}6, 28, 32, 58, {-}30, 36, 0, 42, 50, {-}4, {-}12, {-}26{]} & (1, 4), (2, 3), (5, 8), (6, 7), (9, 12), (10, 11), (13, 16), (14, 15), (17, 20), (18, 19), (21, 24), (22, 23), (25, 28), (26, 27), (29, 32), (30, 31) \\
        \hline
    \end{longtable}

\section{Discussion and Future work}
We classified a graph family with respect to their eigenvectors, i.e., their eigenvectors are identical, where eigenvectors are same as that of cubelike graphs. The weights are integers and the corresponding eigenvalues obtained are also integers. We look forward to find other graph families having same eigenvectors and see if one member admits perfect state transfer then others also allow the same. Similarly, we can test the existence of perfect state transfer for discrete-time coined quantum walk on graphs having same eigenvectors.

\bibliographystyle{acm}
\bibliography{PSTWeightedCubelikeGraph}

\end{document}